\newcommand{\be}{\begin{equation}}
\newcommand{\ee}{\end{equation}}
\newcommand{\ba}{\begin{array}}
\newcommand{\ea}{\end{array}}
\newcommand{\bqa}{\begin{eqnarray}}
\newcommand{\eqa}{\end{eqnarray}}
\newtheorem{theorem}{Theorem}
\newtheorem{lemma}[theorem]{Lemma}
\newtheorem{corollary}[theorem]{Corollary}
\newenvironment{proof-sketch}{%
  \proof}{\endproof}
\newenvironment{manualtheorem}[1]{%
  \manualtheoreminner
}{\endmanualtheoreminner}
\newcounter{protocol}
\begin{document}

\title{Anonymity for practical quantum networks}

\author{Anupama Unnikrishnan}\affiliation{Department of Atomic and Laser Physics, Clarendon Laboratory, University of Oxford, Oxford OX1 3PU, UK}
\author{Ian J. MacFarlane}\affiliation{Massachusetts Institute of Technology, Cambridge, Massachusetts, USA}
\author{Richard Yi}\affiliation{Massachusetts Institute of Technology, Cambridge, Massachusetts, USA}
\author{Eleni Diamanti}\affiliation{LIP6, CNRS, Sorbonne Universit\'e, 75005 Paris, France}
\author{Damian Markham}\affiliation{LIP6, CNRS, Sorbonne Universit\'e, 75005 Paris, France}
\author{Iordanis Kerenidis}\affiliation{IRIF, CNRS, Universit\'e Paris Diderot, Sorbonne Paris Cit\'e, 75013 Paris, France}

\begin{abstract}
Quantum communication networks have the potential to revolutionise information and communication technologies. Here we are interested in a fundamental property and formidable challenge for any communication network, that of guaranteeing the anonymity of a sender and a receiver when a message is transmitted through the network, even in the presence of malicious parties. We provide the first practical protocol for anonymous communication in realistic quantum networks.
\end{abstract}

\date{\today}

\maketitle
The rapid development of quantum communication networks will allow a large number of agents with different technological, classical or quantum, capabilities to securely exchange messages and perform efficiently distributed computational tasks, opening new perspectives for information and communication technologies and eventually leading to the quantum internet \cite{Kim:nature08}. Many applications of quantum networks are known, including, for example, quantum key distribution (QKD) \cite{SBC:rmp09,DLQ:npjqi16} or blind and verifiable delegation of quantum computation \cite{GKK:tcs18}, and many more are yet to be developed.

A crucial yet challenging functionality required in any network is the ability to guarantee the anonymity of two parties, the Sender and the Receiver, when they wish to transmit a message through the network. In a realistic network, anonymity should be guaranteed in the presence of malicious parties. We would additionally like that this happens in an information-theoretic setting, meaning without making any assumptions neither on the number nor on the computational power of these malicious parties, who might in fact have a quantum computer in their hands.

In the classical setting, anonymity, as well as any multiparty secure computation, is possible with information-theoretic security when there is an honest majority of agents.  Furthermore, Broadbent and Tapp \cite{BT:asiacrypt07} showed how to anonymously transmit a classical message, as well as a number of other secure protocols, in the absence of an honest majority.
In order to do this, secure pairwise classical channels are required, as well as classical broadcast channels.

In the quantum setting, the first work to deal with the anonymity of quantum messages was that of Christandl and Wehner \cite{CW:asiacrypt05}. In their work, one assumes that the $n$ agents share a perfect $n$-party GHZ state, \emph{i.e.}, the state $\frac{1}{\sqrt{2}} ( \ket{0^n}+\ket{1^n} )$ \cite{GHZ}. Under this assumption, they provide protocols with perfect anonymity both for the broadcast of a classical bit and for the creation of an EPR pair between a Sender and a Receiver. Then, they combine the two protocols in order to transmit a quantum message using a teleportation scheme \cite{BBC:prl93}. This first creates an EPR pair anonymously between Sender and Receiver, and then the Sender transmits the two classical outcomes of her measurements anonymously.
The advantage of this protocol is that it only involves local operations and classical communication (LOCC) once the GHZ state is shared between the agents. However, it requires the assumption that a perfect GHZ state has been honestly shared between the agents. More recently, Lipinska \emph{et al.} \cite{LMW:pra18} showed how to perform a similar protocol starting from trusted W states, albeit only probabilistically.

In order to remedy the drawback of a perfect shared quantum state, Brassard \emph{et al.} \cite{BBF:asiacrypt07} devised a different protocol, which 
includes a verification stage for ensuring that the shared state is at least symmetric with respect to the honest agents, and hence perfect anonymity is preserved. This test involves each agent performing a controlled-NOT operation between her initial quantum bit (qubit) and $n-1$ fresh ancilla qubits that she then sends to all other agents. Each agent then measures $n-1$ qubits in the subspace spanned by the all zeros and all ones strings and if the measurement accepts then the protocol continues with the remaining $n$-party GHZ state. While the authors manage in this way to preserve perfect anonymity, their protocol cannot be easily implemented, since each agent needs to perform a size-$n$ quantum circuit and also to have access to quantum communication with all other agents.

We address this problem by considering quantum anonymous transmission in the presence of an untrusted source that may not be producing the GHZ state. Our two main ingredients are the Christandl-Wehner protocol for anonymous entanglement \cite{CW:asiacrypt05}, and a protocol for verifying GHZ states described in Ref. \cite{PAW:prl12}. We then present a new notion of approximate anonymity that is appropriate for realistic quantum networks, and show a practical and efficient protocol to achieve such anonymity in the transfer of a quantum message.

\textbf{Communication scenario}.---Let us first describe the communication scenario we consider. Our network consists of $n$ agents who can perform local operations and measurements. A source, who may be malicious, produces GHZ states that our agents wish to use for anonymous quantum communication. The source may produce a different state in every round, or even entangle the states between different rounds. 

The agents themselves may be honest or malicious. Honest agents follow the protocol but malicious agents can collaborate with the source, work together, and apply any cheating strategy on their systems, including entangling them with some ancilla that they may store in memory to be accessed at will. The aim of the malicious agents is to break the anonymity or security of the protocol.

In addition to public quantum channels between all agents, we require some classical communication channels. More specifically, we assume there are private classical channels between each pair of agents. This can be ensured by each pair of agents sharing a private random string, and is a standard assumption if we have malicious agents in a classical network. Furthermore, each agent has access to a broadcast channel, which she can use to send classical information to all other agents. We will use the term simultaneous broadcast when it is required that all agents must broadcast their bit simultaneously, which is an impractical resource as it is hard to ensure in practice. Crucially, we only need a regular (or non-simultaneous) broadcast channel in our anonymous quantum communication protocol; all the subprotocols that we use remove the requirement of simultaneous broadcasting.

\textbf{Anonymous classical protocols}.---We start by providing the details of a few known anonymous classical protocols, some of which we will use directly. First, there exists a classical private protocol from Ref. \cite{BT:asiacrypt07}, \textsf{LogicalOR}, where each agent inputs a single bit and the protocol computes the logical OR of these bits.
This protocol has correctness in that if the input of all agents is $0$, the protocol always outputs the correct answer  (\emph{i.e.} $0$). If any agent inputs $1$, this protocol succeeds (\emph{i.e.} outputs $1$) with probability $1 - 2^{-S}$ after $S$ rounds.
Privacy here means that only the agent can know their input. \textsf{LogicalOR} is built using another protocol \textsf{Parity} \cite{BT:asiacrypt07}, which privately computes the parity of the input string; however, contrary to the the \textsf{Parity} protocol, \textsf{LogicalOR} does not require a simultaneous broadcast channel. Further details of both protocols are given in Appendix A.

We will use the \textsf{LogicalOR} protocol in order to create the functionality \textsf{RandomBit}, given in Protocol 1, which allows the Sender to anonymously choose a random bit according to some probability distribution $D$. The correctness and privacy of \textsf{RandomBit} follow directly from the properties of \textsf{LogicalOR}, namely the only thing the malicious agents learn is the bit chosen by the Sender, but not who the Sender is. We then extend the \textsf{RandomBit} functionality to define a \textsf{RandomAgent} functionality, where the Sender privately picks a random agent by performing the \textsf{RandomBit} protocol $\log_2 n$ times.

\begin{algorithm}[H]
\caption{\textsf{RandomBit}}
\begin{flushleft}
\textit{Input:} All: parameter $S$. Sender: distribution $D$. \\
\textit{Goal:} Sender chooses a bit according to $D$.
\end{flushleft}
\begin{algorithmic}[1]
\STATE The agents pick bits $\{ x_i \}_{i = 1}^n$ as follows: the Sender picks bit $x_i$ to be 0 or 1 according to distribution $D$; all other agents pick $x_i =0$. \\ \
\STATE Perform the \textsf{LogicalOR} protocol with input $\{ x_i \}_{i=1}^n$ and security parameter $S$ and output its outcome.
\end{algorithmic}
\end{algorithm}

Last, we need the \textsf{Notification} functionality \cite{BT:asiacrypt07}, given in Protocol 2, where the Sender anonymously notifies an agent as the Receiver.
Note that we use the same security parameter $S$ throughout for simplicity, however this is not required. As we explicitly call on this in our main protocol, we describe it below.

\begin{algorithm}[H]
\caption{\textsf{Notification} \cite{BT:asiacrypt07}}
\begin{flushleft}
\textit{Input}: Security parameter $S$, Sender's choice of Receiver is agent $r$. \\
\textit{Goal}: Sender notifies Receiver.
\end{flushleft}
\begin{algorithmic}[1]
\STATE For each agent $i$:
\begin{enumerate}
\item[(a)] Each agent $j \neq i$ picks $p_j$ as follows: if $i = r$ and agent $j$ is the Sender, then $p_j = 1$ with probability $\frac{1}{2}$ and $p_j = 0$ with probability $\frac{1}{2}$. Otherwise, $p_j = 0$. Let $p_i = 0$.
\item[(b)] Run the \textsf{Parity} protocol with input $\{p_i\}_{i=1}^n$, with the following differences: agent $i$ does not broadcast her value, and they use a regular broadcast channel rather than simultaneous broadcast. If the result is $1$, then $y_i = 1$.
\item[(c)] Repeat steps 1(a) - (b) $S$ times. If the result of the \textsf{Parity} protocol is never 1, then $y_i = 0$. \\ \
\end{enumerate}
\STATE If agent $i$ obtained $y_i = 1$, then she is the Receiver.
\end{algorithmic}
\end{algorithm}

\textbf{Anonymous entanglement with perfect trusted GHZ states}.---In addition to the previous classical protocols, we will need the \textsf{Anonymous Entanglement} protocol from Ref. \cite{CW:asiacrypt05}, given in Protocol 3. Here, it is assumed that the agents share a state which in the honest case is the GHZ state, and that the Sender and Receiver know their respective identities. It is not hard to see that assuming the initial state is a perfect GHZ state, then the protocol creates an EPR pair between the Sender and the Receiver perfectly anonymously.

\begin{algorithm}[H]
\caption{\textsf{Anonymous Entanglement} \cite{CW:asiacrypt05}}
\begin{flushleft}
\textit{Input}: $n$ agents share a GHZ state. \\
\textit{Goal}: EPR pair shared between Sender and Receiver.
\end{flushleft}
\begin{algorithmic}[1]
\STATE Each agent, apart from the Sender and Receiver, applies a Hadamard transform to their qubit. They measure in the computational basis and broadcast their outcome. \\ \
\STATE The Sender first picks a random bit $b$, broadcasts it, and applies a phase flip $\sigma_z$ only when $b=1$. \\ \
\STATE The Receiver picks a random bit $b'$, broadcasts it and applies a phase flip $\sigma_z$ only when the parity of everyone else's broadcasted bits is 1.
\end{algorithmic}
\end{algorithm}

\textbf{Efficient verification of GHZ states}.---The last ingredient we use is the \textsf{Verification} protocol for GHZ states from the work of Pappa \emph{et al.} \cite{PAW:prl12} that was also implemented for 3- and 4-party GHZ states in McCutcheon \emph{et al.} \cite{MPB:natcomm16}.
There, one of the agents, the Verifier, would like to verify how close the shared state is to the ideal state. Let $k$ be the number of honest agents. The verification protocol is then given in Protocol 4.

\begin{algorithm}[H]
\caption{\textsf{Verification} \cite{PAW:prl12,MPB:natcomm16}}
\begin{flushleft}
\textit{Input}: $n$ agents share state $\ket{\Psi}$. \\
\textit{Goal}: GHZ verification of $\ket{\Psi}$ for $k$ honest agents.
\end{flushleft}
\begin{algorithmic}[1]
\STATE The Verifier generates random angles $\theta_j \in [0,\pi)$ for all agents including themselves ($j\in[n]$), such that $\sum_j \theta_j$ is a multiple of $\pi$. The angles are then sent out to all the agents in the network. \\ \
\STATE Agent $j$ measures in the basis $\{\ket{+_{\theta_j}},\ket{-_{\theta_j}}\}=\{\frac{1}{\sqrt{2}}(\ket{0}+e^{i\theta_j}\ket{1}),\frac{1}{\sqrt{2}}(\ket{0}-e^{i\theta_j}\ket{1})\}$, and sends the outcome $Y_j=\{0,1\}$ to the Verifier. \\ \
\STATE The state passes the verification test when the following condition is satisfied: if the sum of the randomly chosen angles is an even multiple of $\pi$, there must be an even number of $1$ outcomes for $Y_j$, and if the sum is an odd multiple of $\pi$, there must be an odd number of $1$ outcomes for $Y_j$. We can write this condition as
$
\bigoplus_j Y_j=\frac{1}{\pi}\sum_j\theta_j\pmod 2.
$
\end{algorithmic}
\end{algorithm}

From the proofs in Refs.\@ \cite{PAW:prl12} and \cite{MPB:natcomm16}, one can see that the ideal state always passes the verification test, and, more interestingly, a soundness statement can also be proven. As in \cite{PAW:prl12}, we take the ideal $n$-party state to be $\ket{\Phi_0^n}$, given by:
\begin{align}
\ket{\Phi_0^n} = \frac{1}{\sqrt{2^{n-1}}} \Big[ \underset{\Delta(y) = 0 \text{ (mod 4)}}{\sum} \ket{y} - \underset{\Delta(y) = 2 \text{ (mod 4)}}{\sum} \ket{y} \Big],\nonumber
\end{align}
where $\Delta(y) = \sum_i y_i$ denotes the Hamming weight of the classical $n$-bit string $y$. This state is equivalent to the GHZ state up to local unitaries.
Analogous to \cite{PAW:prl12, MPB:natcomm16}, to measure the quality of the state $\ket{\Psi}$ shared between the $n$ agents, we take a fidelity measure given by $F'(\ket{\Psi}) = \underset{U}{\max \ } F(U \ket{\Psi}, \ket{\Phi_0^n})$, where $U$ is any unitary operation on the space of the malicious agents.
This reflects the fact that we are concerned with certifying the state up to operations on the malicious parts, since these are in any case out of the control of the honest agents.
Then, even assuming the malicious agents apply their optimal cheating strategy, the probability of passing the test with the state $\ket{\Psi}$, denoted by $P(\ket{\Psi})$, satisfies $F'(\ket{\Psi}) \geq 4 P(\ket{\Psi}) - 3$ \cite{PAW:prl12, MPB:natcomm16}. Note that this holds even if the shared state is mixed; however, as we will see later, a clever malicious source will always create pure states.

For our purposes, we will use below a version of this verification protocol that is similar to the Symmetric Verification protocol in Ref. \cite{PAW:prl12}. There, it was shown that with the use of a trusted common random string it was possible for all agents to take random turns verifying the validity of the GHZ state. This leads to the guarantee that if the state is accepted a large number of times before the agents decide to use it, then with high probability, when the state is used it should be very close to the correct one.

\textbf{Anonymity for realistic quantum networks}.---All the quantum protocols we have seen that are used to achieve anonymity assume perfect operations and achieve perfect anonymity. In practice, of course, no operation can be perfect and hence perfect anonymity is unattainable. Nevertheless, it is still possible to define an appropriate notion of anonymity that is relevant for practical protocols.

We define the notion of an {\em $\epsilon-$anonymous} protocol, where for any number $n-k$ of malicious agents out of $n$ agents in total, the malicious agents, even when they have in their possession the entire quantum state that corresponds to the protocol, can only guess who the Sender is (even when the Receiver is malicious) or who the Receiver is, with probability that is bounded by $\frac{1}{k}+\epsilon$. The perfect anonymity is defined when $\epsilon$ is equal to 0.

\textbf{Efficient anonymous quantum message transmission}.---We will now show how to devise an efficient $\epsilon$-anonymous protocol for quantum message transmission.
For simplicity we assume there is only one Sender. If not, the agents can run a simple classical protocol in the beginning of the protocol in order to deal with collisions (multiple Senders) and achieve the unique Sender property. See also Refs. \cite{BT:asiacrypt07} and \cite{CW:asiacrypt05} for details.

Moreover, for simplicity we will describe a protocol where we distribute one EPR pair between the Sender and the Receiver. Then one can perform anonymous teleportation of the classical measurement results, using in particular the \textsf{Fixed Role Anonymous Message Transmission} functionality as was described in Ref. \cite{BT:asiacrypt07}. In case we want to increase the fidelity of the transmitted quantum message, we can further use the subroutines from Brassard \emph{et al.} \cite{BBF:asiacrypt07} which first create a number of non-perfect EPR pairs, then distill one pair and then perform the teleportation. Given that our main contribution is the efficient anonymous protocol for the GHZ verification, we do not provide here these details that are explained in Ref. \cite{BT:asiacrypt07}.

Our scheme is outlined in Protocol 5.

\begin{algorithm}[H]
\caption{\textsf{$\epsilon$-Anonymous Entanglement Distribution}}
\begin{flushleft}
\textit{Input}: Security parameter $S$.\\
\textit{Goal}: EPR pair created between Sender and Receiver with $\epsilon$-anonymity.
\end{flushleft}
\begin{algorithmic}[1]
    \STATE {\bf The Sender notifies the Receiver:}

    The agents run the \textsf{Notification} protocol. \\ \ 

    \STATE {\bf GHZ state generation:}

    The source generates a state $\ket{\Psi}$ and distributes it to the agents. \\ \

    \STATE {\bf The Sender anonymously chooses Verification or Anonymous Entanglement:}
    \begin{enumerate}
    \item[(a)] The agents perform the \textsf{RandomBit} protocol, with the Sender choosing her input according to the following probability distribution: she flips $S$ fair classical coins, and if all coins are heads, she inputs $0$, else she inputs $1$. Let the outcome be $x$.
    \item[(b)] If $x=0$, the agents run \textsf{Anonymous Entanglement}, \;\;\;
    else if $x=1$:
    \begin{enumerate}
    \item[(i)] Run the \textsf{RandomAgent} protocol, where the Sender inputs a uniformly random $j \in [n]$, to get output $j$.
    \item[(ii)] Agent $j$ runs the \textsf{Verification} protocol as the Verifier, and if she accepts the outcome of the test they return to step 2, otherwise the protocol aborts.
    \end{enumerate}
    \end{enumerate}
If at any point in the protocol, the Sender realises someone does not follow the protocol, she stops behaving like the Sender and behaves as any agent.
\end{algorithmic}
\end{algorithm}

We are now ready to analyse the above protocol. First, note that if the state is a perfect GHZ state and the operations of the honest agents are perfect, then the anonymity of the protocol is perfect.

In step 1, the agents run the \textsf{Notification} protocol which is perfectly anonymous. In the second step, the GHZ state is shared between the agents, which does not affect the anonymity. Note that the role of the source can be played by an agent, as long as the choice of the agent is independent of who the Sender is. In step 3(a), the agents run the \textsf{RandomBit} protocol which is also perfectly anonymous. The analysis of the step 3(b) follows from the analysis of the Symmetric Verification protocol in Ref. \cite{PAW:prl12}. The only difference here is that instead of using a common random string, it is the Sender who picks the randomness uniformly. Thus, since the input of the Sender completely determines the outcome of the protocol, the Sender can immediately see if her choice does not correspond to the outcome, and hence only continues if the randomness is perfectly uniform.

Let $C_\epsilon$ be the event that the above protocol does not abort and that the state used for the \textsf{Anonymous Entanglement} protocol is such that no matter what operation the malicious agents do to their part, the fidelity of the state with the GHZ state is at most $\sqrt{1 - \epsilon^2}$. Then, we prove the following Theorem for the honest agents:
\begin{theorem}
For all $\epsilon >0$,
\begin{equation}
\text{Pr}[C_\epsilon] \leq  2^{-S}\frac{4n}{1 - \sqrt{1 - \epsilon^2}}.
\end{equation}
\end{theorem}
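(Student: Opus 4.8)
The plan is to reduce the theorem to a race, at each iteration of Step~3, between two events: the Sender choosing to \emph{use} the shared state (which happens with probability $2^{-S}$, since she inputs $0$ to \textsf{RandomBit} only when all $S$ coins come up heads), versus an honest Verifier \emph{catching} a bad state during \textsf{Verification}. The first thing I would do is convert the fidelity hypothesis defining $C_\epsilon$ into a detection probability. If the state $\ket{\Psi}$ used at some iteration is bad, i.e.\ $F'(\ket{\Psi}) \le \sqrt{1-\epsilon^2}$, then the soundness bound $F'(\ket{\Psi}) \ge 4P(\ket{\Psi}) - 3$ gives $P(\ket{\Psi}) \le \frac{3+\sqrt{1-\epsilon^2}}{4}$, so an honest Verifier rejects it with probability at least $\frac{1-\sqrt{1-\epsilon^2}}{4}$. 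Since the Verifier is chosen by the honest Sender uniformly at random via \textsf{RandomAgent}, it is honest with probability $k/n \ge 1/n$ (there is at least the Sender); hence, conditioned on a bad state being tested, the protocol aborts with probability at least $c := \frac{1-\sqrt{1-\epsilon^2}}{4n}$.

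Next I would exploit the crucial timing property of the protocol: at each iteration the source commits to a state in Step~2 \emph{before} the Sender draws her bit $x$ in Step~3(a) and before the random Verifier is selected. Thus $x$ and the Verifier's identity are independent of the current round's state, and in particular of whether it is bad. Writing $R_t$ for the event that iteration $t$ is reached and $B_t$ for the event that the round-$t$ state is bad---both determined before $x$ is drawn---I would let $U_t$ be the event of using a bad state at iteration $t$ and $F_t$ the event of aborting at iteration $t$ because an honest Verifier rejected a bad state. Independence then yields $\Pr[U_t] = 2^{-S}\,\Pr[R_t\cap B_t]$ and $\Pr[F_t] \ge (1-2^{-S})\,c\,\Pr[R_t\cap B_t]$, giving the per-iteration comparison $\Pr[U_t] \le \frac{2^{-S}}{(1-2^{-S})c}\,\Pr[F_t]$. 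This is the step that neutralises the adaptivity and cross-round entanglement of the source: because these inequalities hold round by round with no assumption on how the states are generated, I never need to identify the worst-case source strategy explicitly.

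Finally I would sum over $t$. Setting $U = \sum_t \Pr[U_t] = \Pr[C_\epsilon]$ and $F = \sum_t \Pr[F_t]$, the comparison gives $U \le \frac{2^{-S}}{(1-2^{-S})c}\,F$. Since the events $\{U_t\}_t$ and $\{F_t\}_t$ are mutually exclusive (the protocol terminates at a single iteration, either by using a state or by aborting), we also have $U + F \le 1$. Combining these two relations and solving for $U$ gives $U \le \frac{2^{-S}}{2^{-S}+(1-2^{-S})c}$, and since $2^{-S}+(1-2^{-S})c = c + 2^{-S}(1-c) \ge c$, this is at most $2^{-S}/c = 2^{-S}\frac{4n}{1-\sqrt{1-\epsilon^2}}$, as claimed.

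The main obstacle is not the final algebra but establishing the two ingredients of the per-iteration bound together: that a random Verifier is honest with probability at least $1/n$ (so that the soundness guarantee of \textsf{Verification} actually applies, a malicious Verifier being free to accept any state), and that the Sender's use-or-test choice is genuinely independent of the round's state. The latter rests on the observation, already noted for Step~3, that the Sender controls and checks her own randomness, so a dishonest source cannot correlate the quality of the state it emits with whether that state will end up being tested or used.
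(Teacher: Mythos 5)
Your proposal is correct, and it reaches the theorem by a genuinely different route than the paper. The paper first argues (somewhat informally) that the source's optimal attack is to send, unentangled across rounds, a fixed pure state with $F'(\ket{\Psi})=\sqrt{1-\epsilon^2}$; it then writes $\Pr[C_\epsilon^l]$ at round $l$ as an explicit product of the use probability $2^{-S}$, the test probability $[(1-2^{-S})(1-2^{-S})]^{l-1}$, and the pass probability $\big(\tfrac{n-k}{n}+\tfrac{k}{n}P(\ket{\Psi})\big)^{l-1}$, and sums over $l$ via an integral to get $2^{-S}\tfrac{4n}{k(1-\sqrt{1-\epsilon^2})}$ before dropping $k\geq 1$. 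You instead never identify the optimal attack: your per-round comparison $\Pr[U_t]\leq \tfrac{2^{-S}}{(1-2^{-S})c}\Pr[F_t]$, combined with the disjointness relation $U+F\leq 1$, handles adaptive, mixed, and cross-round-entangled strategies directly, since the conditional independence of the Sender's fresh coins and Verifier choice from the committed state holds round by round, and the soundness bound $P\leq \tfrac{3+F'}{4}$ holds for mixed states. This is arguably more rigorous than the paper's sketched reduction to i.i.d.\ pure states, and it replaces the geometric-series/integral computation with elementary algebra; what it gives up is the explicit $k$-dependent intermediate bound (though your $c$ could retain the factor $k$ just as easily). One bookkeeping caveat: a test at a given round actually occurs with probability $(1-2^{-S})^2$ rather than $(1-2^{-S})$ (the Sender must input $1$ \emph{and} \textsf{LogicalOR} must output $1$), and \textsf{RandomAgent} succeeds only with probability about $(1-2^{-S})^{\log_2 n}$; writing the detection rate as $\gamma c$ with $\gamma = (1-2^{-S})^{2+\log_2 n}$, your final step still closes because $2^{-S}+\gamma c \geq c$ follows from $1-\gamma \leq (2+\log_2 n)2^{-S}$ and $c \leq \tfrac{1}{4n}$, so the claimed bound $2^{-S}\tfrac{4n}{1-\sqrt{1-\epsilon^2}}$ survives. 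Note also that, like the paper, you implicitly adopt the convention that the Sender aborts whenever the \textsf{RandomBit} outcome differs from her input, which is what justifies counting a use of the state only when all $S$ coins are heads.
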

\begin{proof-sketch}
As proved in \cite{PAW:prl12}, the optimal cheating strategy of a malicious source, which maximises the probability of $C_\epsilon$, is to create in each round of the protocol a pure state $\ket{\Psi}$ such that $F'(\ket{\Psi}) = \sqrt{1-\epsilon^2}$.

The probability of event $C_\epsilon$ is then given by the probability of the state being used and all the tests being passed in the previous rounds. This in turn will depend on the success probability of \textsf{RandomBit}, and if the agent chosen to act as the Verifier is honest. Given that a state with $F'(\ket{\Psi})$ passes the verification protocol with probability $P(\ket{\Psi})$, we can then determine a bound on $\text{Pr}[C_\epsilon]$ by following the proof in Ref. \cite{PAW:prl12}. The full proof is given in Appendix B.

\end{proof-sketch}
By taking $S= \log_2 (\frac{4n}{(1- \sqrt{1 - \epsilon^2})\delta})$, we have $\text{Pr}[C_\epsilon] \leq \delta$.
Let us assume for simplicity that when the event $C_\epsilon$ is true, which happens with probability at most $\delta$, the malicious agents can perfectly guess the Sender or the Receiver. We will now see that when the event $C_\epsilon$ is false, which happens with probability at least $1-\delta$, the malicious agents cannot guess the Sender or the Receiver with probability much higher than a random guess.
In other words, there is no strategy for breaking the anonymity of the communication that works much better than simply guessing an honest agent at random.

Note that $C_\epsilon$ being false means that the fidelity of the shared state with the GHZ state (up to a local operation on the malicious agents) is at least $\sqrt{1-\epsilon^2}$.
By doing enough rounds, we can ensure that the probability of $C_\epsilon$ is negligible.
Our statement of anonymity is given as follows:
\begin{theorem}
If the agents share a state $\ket{\Psi}$ such that $F'(\ket{\Psi}) \geq \sqrt{1-\epsilon^2}$, then the probability that the malicious agents can guess the identity of the Sender is given by:
\begin{align}
\text{Pr}[\text{guess}] & \leq \frac{1}{k} + \epsilon.
\end{align}
\label{th:anon}
\end{theorem}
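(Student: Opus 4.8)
The plan is to reduce the entire analysis to the perfectly anonymous ideal case, and to control the deviation from that case by the trace distance implied by the fidelity hypothesis. First I would record that for the ideal state $\ket{\Phi_0^n}$ the \textsf{Anonymous Entanglement} protocol is perfectly anonymous, as established by Christandl and Wehner: because $\ket{\Phi_0^n}$ and the honest operations of the protocol are symmetric under permutations of the honest agents, the joint system held by the malicious agents at the end of the protocol (their quantum subsystem together with the classical registers recording all broadcasts) is \emph{independent} of which honest agent plays the Sender. Writing $\mathcal{E}_s$ for the completely positive trace-preserving map that sends the initially shared state to this final malicious system when the Sender is the honest agent $s$, perfect anonymity means $\mathcal{E}_s(\ket{\Phi_0^n}\bra{\Phi_0^n}) = \rho'$ for a single state $\rho'$ and all honest $s$, so that against the ideal state any guessing strategy succeeds with probability exactly $1/k$. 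Note that the malicious part of $\mathcal{E}_s$ cannot depend on $s$, since $s$ is precisely what the adversary is trying to learn.

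Next I would convert the fidelity bound into a trace-distance bound and absorb the optimizing unitary into the attack. Since $F'(\ket{\Psi}) = \max_U F(U\ket{\Psi},\ket{\Phi_0^n}) \geq \sqrt{1-\epsilon^2}$, there is a unitary $U$ acting only on the malicious subsystem with $|\expect{\Phi_0^n}{U}{\Psi}| \geq \sqrt{1-\epsilon^2}$; as the trace distance between pure states is $\sqrt{1-|\langle\cdot|\cdot\rangle|^2}$, this gives $D(U\ket{\Psi},\ket{\Phi_0^n}) \leq \epsilon$. The crucial observation is that $U$ touches only malicious qubits and therefore commutes with all honest operations (the Hadamards, measurements and phase flips of the honest agents). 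Hence applying $\mathcal{E}_s$ to $\ket{\Psi}$ is, up to merging $U$ with the malicious part of the attack, the same as applying a modified map $\mathcal{E}_s'$ to $U\ket{\Psi}$, and $\mathcal{E}_s'$ still produces an $s$-independent state $\rho'$ from $\ket{\Phi_0^n}$ by the symmetry argument above. Contractivity of the trace distance under CPTP maps, together with unitary invariance, then yields $D(\tilde{\rho}_s,\rho') \leq D(U\ket{\Psi},\ket{\Phi_0^n}) \leq \epsilon$, where $\tilde{\rho}_s = \mathcal{E}_s(\ket{\Psi}\bra{\Psi})$.

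Finally I would bound the guessing probability directly. Modelling the adversary's guess as a POVM $\{M_s\}$ with $0 \leq M_s \leq I$ and taking a uniform prior over the $k$ honest agents,
\[
\text{Pr}[\text{guess}] = \frac{1}{k}\sum_s \mathrm{Tr}[M_s \tilde{\rho}_s] = \frac{1}{k}\,\mathrm{Tr}\Big[\Big(\sum_s M_s\Big)\rho'\Big] + \frac{1}{k}\sum_s \mathrm{Tr}[M_s(\tilde{\rho}_s - \rho')].
\]
The first term is at most $1/k$ because $\sum_s M_s \leq I$ and $\mathrm{Tr}[\rho']=1$. For each summand in the second term I would invoke the standard inequality $\mathrm{Tr}[M(\rho-\sigma)] \leq D(\rho,\sigma)$, valid for any $0 \leq M \leq I$ (bound $\mathrm{Tr}[M(\tilde{\rho}_s-\rho')]$ by the trace of the positive part of $\tilde{\rho}_s-\rho'$), to get $\mathrm{Tr}[M_s(\tilde{\rho}_s-\rho')] \leq D(\tilde{\rho}_s,\rho') \leq \epsilon$; summing the $k$ terms and dividing by $k$ leaves a deviation of at most $\epsilon$, giving $\text{Pr}[\text{guess}] \leq \tfrac{1}{k}+\epsilon$.

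The main obstacle I anticipate is the commutation step in the second paragraph: one must argue carefully that the unitary $U$ appearing in the definition of $F'$, though only a device for quantifying closeness to the ideal state, can be legitimately passed through the honest operations and merged with an arbitrary malicious attack, so that comparing the real protocol on $\ket{\Psi}$ to the ideal protocol on $\ket{\Phi_0^n}$ remains valid for \emph{every} adversarial strategy. Once this reduction is justified, the remainder is a routine application of contractivity and the POVM--trace-distance inequality, and it is precisely the cancellation of the factor $1/k$ against the $k$ deviation terms that delivers the sharp $\epsilon$ rather than a looser $2\epsilon$.
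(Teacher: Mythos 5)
Your proposal is correct, but it establishes the key closeness statement by a genuinely different route from the paper. The paper proves an explicit algebraic lemma (its Lemmas 2A and 2B): writing the optimally-rotated state as $\ket{\Psi'} = \ket{\Phi_0^k}\ket{\psi_0} + \ket{\Phi_1^k}\ket{\psi_1} + \ket{\chi}$ and bounding $\langle \psi_0 | \psi_0 \rangle + \langle \psi_1 | \psi_1 \rangle \geq \sqrt{1-\epsilon^2}$, it shows directly that the two \emph{real} post-transformation states satisfy $F(\ket{\Psi_i},\ket{\Psi_j}) \geq 1-\epsilon^2$ for any two candidate Senders $i,j$, and then runs exactly your final POVM step with the fixed reference $\ket{\alpha}=\ket{\Psi_j}$. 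You instead compare each real output to a \emph{common ideal output} via a simulation/hybrid argument: perfect anonymity on the ideal state, absorption of the optimizing unitary $U$ into the adversary, and contractivity of trace distance under CPTP maps. Your commutation step is sound for the reason you give --- $U$ is supported only on the malicious registers, so it commutes with every honest operation, and since the theorem quantifies over all adversarial strategies, prepending $U^\dagger$ yields another admissible attack with the same $s$-independent ideal output $\rho'$; note also that the needed ideal-case $s$-independence holds even against the paper's strongest adversary (who holds the entire state), because $(\sigma_x\sigma_z)_s\ket{\Phi_0^n} = \ket{\Phi_1^n}$ up to phase for \emph{every} position $s$. What each approach buys: yours is more modular and sidesteps the somewhat delicate overlap bookkeeping of Lemma 2B (unnormalized malicious components, triangle and Cauchy--Schwarz manipulations), and it generalizes to any protocol that is perfectly anonymous on its ideal resource; the paper's buys an explicit pairwise fidelity bound between the real states that makes no appeal to an ideal-protocol anonymity statement. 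One minor convention mismatch: the paper defines fidelity as the squared overlap, so $F'(\ket{\Psi}) \geq \sqrt{1-\epsilon^2}$ gives $D(U\ket{\Psi},\ket{\Phi_0^n}) \leq \sqrt{1-\sqrt{1-\epsilon^2}} \leq \epsilon$, whereas you read the hypothesis as a bound on the overlap itself; since the paper's hypothesis is the stronger one, your conclusion $D \leq \epsilon$ survives either reading, and the final cancellation $\tfrac{1}{k}\cdot k\epsilon = \epsilon$ is identical in both proofs.
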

\begin{proof-sketch}
First, we show that when the shared state is close to the GHZ state (up to some operation $U$ on the malicious agents' part of the state), then the fidelity between the final state of the protocol when the Sender is agent $i$, $\ket{\Psi_i}$, and the final state of the protocol when the Sender is agent $j$, $\ket{\Psi_j}$, is high.

Then, we show that when the fidelity between the states $\ket{\Psi_i}$ and $\ket{\Psi_j}$ is close to 1, the probability that the malicious agents can guess the identity of the Sender is close to a random guess. The full proof is given in Appendix C.
\end{proof-sketch}

Finally, we consider the entangled state created anonymously between the Sender and Receiver. Although we have not considered a particular noise model, our analysis incorporates a reduced fidelity of $\ket{\Psi}$, the state shared by all the agents at the beginning of the protocol. We can carry this forward to the resulting anonymously entangled state, if we assume all the agents are honest and have followed the protocol. We find that the fidelity of the final entangled state with the EPR pair will be at least the fidelity of $\ket{\Psi}$ with the GHZ state. After the entangled state has been constructed, the Sender and Receiver can perform anonymous teleportation of any quantum message $\ket{\phi}$ by anonymously sending a classical message with the teleportation results. Our final statement is then given in Corollary \ref{cor:anon}.

\begin{corollary}
Using Protocol 5, we can achieve an $\epsilon$-anonymous protocol for quantum message transmission.
\label{cor:anon}
\end{corollary}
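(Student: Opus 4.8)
The plan is to derive Corollary~\ref{cor:anon} by composing the guarantees established for the sub-protocols that make up Protocol~5, separating the argument into the stage that distributes the EPR pair and the stage that teleports the quantum message. For the distribution stage I would condition on the event $C_\epsilon$. On its complement, Theorem~\ref{th:anon} applies directly: since $\neg C_\epsilon$ forces $F'(\ket{\Psi}) \geq \sqrt{1-\epsilon^2}$, the probability that the malicious agents guess the Sender is at most $\frac{1}{k}+\epsilon$. On $C_\epsilon$ itself I would take the worst case and concede a perfect guess. Combining with Theorem~1 and the choice $S=\log_2\!\big(\tfrac{4n}{(1-\sqrt{1-\epsilon^2})\delta}\big)$, which gives $\text{Pr}[C_\epsilon]\leq\delta$, the total guessing probability is bounded by $(1-\delta)(\tfrac{1}{k}+\epsilon)+\delta \leq \tfrac{1}{k}+\epsilon+\delta$. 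Taking $S$ large makes $\delta$ negligible, so the protocol is $(\epsilon+\delta)$-anonymous for the Sender, and absorbing $\delta$ into the anonymity parameter yields the claim.

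Next I would argue that the remaining ingredients do not leak any identity information. The classical sub-protocols --- \textsf{Notification}, \textsf{RandomBit}, and \textsf{RandomAgent} --- are perfectly anonymous by the results of Ref.~\cite{BT:asiacrypt07}, so each reveals only its intended output (a notification bit, a chosen bit, or a chosen agent index) and never the identity of the Sender or Receiver. Receiver anonymity then follows by the same reasoning: the Receiver is selected solely through \textsf{Notification}, whose perfect anonymity guarantees the malicious agents learn nothing about who was notified beyond what $C_\epsilon$ already concedes, and the symmetric role of Sender and Receiver in \textsf{Anonymous Entanglement} lets the Theorem~\ref{th:anon} bound carry over. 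Thus both the Sender-guessing and Receiver-guessing probabilities are bounded by $\frac{1}{k}+\epsilon+\delta$, covering also the case where the Receiver is malicious.

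Finally I would promote the anonymously shared EPR pair to full quantum message transmission. As noted above, when all honest agents follow the protocol the fidelity of the distributed entangled state with a perfect EPR pair is at least $F'(\ket{\Psi})$, hence at least $\sqrt{1-\epsilon^2}$ on $\neg C_\epsilon$. The Sender then teleports $\ket{\phi}$ by transmitting the two classical Bell-measurement outcomes through the \textsf{Fixed Role Anonymous Message Transmission} functionality of Ref.~\cite{BT:asiacrypt07}, which is anonymous and requires no simultaneous broadcast. Since this classical step reveals no identities, the quantum message transmission inherits the $\epsilon$-anonymity of the entanglement distribution.

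The main obstacle I expect is the composition step rather than any single estimate: I must verify that chaining the conditionally-anonymous EPR creation with the perfectly anonymous classical sub-protocols and the teleportation does not introduce correlations that leak identity, so that the error contributions add --- the $\epsilon$ coming from the residual fidelity via Theorem~\ref{th:anon} and the $\delta$ coming from the abort/bad-state event via Theorem~1 --- rather than compounding. A related subtlety is confirming that the clause ``even when they have in their possession the entire quantum state'' in the definition of $\epsilon$-anonymity is already subsumed by the maximization over unitaries $U$ on the malicious subsystem in the definition of $F'$, so that no separate adversarial case needs to be treated.
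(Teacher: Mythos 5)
Your proposal is correct and follows essentially the same route as the paper, which offers no separate formal proof of Corollary~\ref{cor:anon} but derives it exactly as you do: conditioning on $C_\epsilon$ (conceding a perfect guess there, bounded by $\delta$ via Theorem~1 and the stated choice of $S$), applying Theorem~\ref{th:anon} on the complement, carrying the fidelity $F'(\ket{\Psi})$ forward to the distributed EPR pair, and completing the transmission by teleportation with the classical outcomes sent through the \textsf{Fixed Role Anonymous Message Transmission} functionality of Ref.~\cite{BT:asiacrypt07}. Your explicit $(1-\delta)(\tfrac{1}{k}+\epsilon)+\delta$ accounting and the remark that the ``entire quantum state'' clause is subsumed by the maximization over $U$ in $F'$ merely make precise what the paper states informally.
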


\textbf{Discussion}.---We have proposed a practical protocol for anonymous quantum communications in the presence of malicious parties and an untrusted source. The verification step is carried out using a protocol that has been experimentally demonstrated \cite{MPB:natcomm16}, and is tolerant to losses and noise by design. Our protocol achieves in this full adversarial scenario an approximate notion of anonymity that we call $\epsilon$-anonymity and which is relevant in the context of realistic quantum networks.

While the scheme in Ref. \cite{BBF:asiacrypt07} results in an exponential scaling, their protocol is not easily implementable.
Recent work in Ref. \cite{LMW:pra18} provides a protocol for anonymous transmission using the W state rather than the GHZ state. While this is beneficial in terms of robustness to noise, the protocol creates the anonymously entangled state only with a probability $2/n$. Furthermore, the security analysis considers only the semi-active adversarial scenario, which requires a trusted source.

Our anonymous quantum communication protocol opens the way to the integration and implementation of this fundamental functionality into quantum networks currently under development.

{\bf Acknowledgments}.--- We acknowledge support of the European Union's Horizon 2020 Research and Innovation Programme under Grant Agreement No. 820445 (QIA), the ANR through the ANR-17-CE24-0035 VanQuTe and ANR-17-CE39-0005 quBIC projects, the BPI France project RISQ, the EPSRC (UK), and the MIT-France International Science and Technology Initiative.

\section{Appendix A: anonymous classical protocols}
We first give the \textsf{Parity} protocol from \cite{BT:asiacrypt07}, by which a set of $n$ agents can privately determine the parity of their input string (or equivalently, the XOR of their input bits), in Protocol \ref{alg:parity}. Note that although this uses a simultaneous broadcast channel, we only use the modified version of this protocol (as given in the \textsf{LogicalOR} protocol afterwards), which just requires a regular broadcast channel.

\begin{algorithm}[h]
\caption{\textsf{Parity} \cite{BT:asiacrypt07}} 
\label{alg:parity}
\begin{flushleft}
\textit{Input}: $\{ x_i \}_{i=1}^n$. \\
\textit{Goal}: Each agent gets $y_i = \bigoplus_{i=1}^n x_i$.
\end{flushleft}
\begin{algorithmic}[1]
\STATE Each of the $n$ agents wants to input their bit $x_i$. Every agent $i$ chooses random bits $\{r_i^j \}_{j=1}^n$ such that $\bigoplus_{j=1}^n r_i^j = x_i$. \\ \
\STATE Every agent $i$ sends their $j$th bit $r_i^j$ to agent $j$ ($j$ can equal $i$). \\ \
\STATE Every agent $j$ computes $z_j=\bigoplus_{i=1}^n r_i^j$ and reports the value in the simultaneous broadcast channel. \\ \
\STATE The value $z=\bigoplus_{j=1}^n z_j$ is computed, which equals $y_i$.
\end{algorithmic}
\end{algorithm}

This protocol is then used to construct the \textsf{LogicalOR} protocol \cite{BT:asiacrypt07}, by which a set of $n$ agents can privately determine the logical OR of their inputs. Due to repeating the protocol with different orderings of the agents each time, the simultaneous broadcast channel is no longer required. This is given in Protocol \ref{alg:logicalor}.

\begin{algorithm}[h]
\caption{\textsf{LogicalOR} \cite{BT:asiacrypt07}} 
\label{alg:logicalor}
\begin{flushleft}
\textit{Input}: $\{ x_i \}_{i=1}^n$, security parameter $S$. \\
\textit{Goal}: Each agent gets $y_i = \bigvee_{i=1}^n x_i$.
\end{flushleft}
\begin{algorithmic}[1]
\STATE The agents agree on $n$ orderings, with each ordering having a different last participant. \\ \
\STATE For each ordering:
\begin{enumerate}
\item[(a)] Each agent $i$ picks the value of $p_i$ as follows: if $x_i=0$, then $p_i=0$; if $x_i=1$, then $p_i=1$ with probability $\frac{1}{2}$ and $p_i=0$ with probability $\frac{1}{2}$. 
\item[(b)] Run the \textsf{Parity} protocol with input $\{p_i\}_{i=1}^n$, with a regular broadcast channel rather than simultaneous broadcast, and with the agents broadcasting according to the current ordering. If the result is $1$, then $y_i = 1$. 
\item[(c)] Repeat steps 2(a) - 2(b) $S$ times in total. If the result of the \textsf{Parity} protocol is never $1$, then $y_i = 0$.
\end{enumerate}
\end{algorithmic}
\end{algorithm}

If all agents input $x_i = 0$, then the \textsf{LogicalOR} protocol is correct with probability 1, however if any agent inputs $x_i = 1$, then the correctness is $(1 - 2^{-S})$.
In Protocol 5, the functionality \textsf{RandomBit} is used to pick between the verification and use of the state. This in turn calls \textsf{LogicalOR}. In an honest run, the outcome of \textsf{RandomBit} is the outcome of the Sender, and so if any agent behaves dishonestly the Sender will abort.
The functionality \textsf{RandomAgent}, which also calls \textsf{LogicalOR}, is simply a repetition of \textsf{RandomBit}, and so the same argument holds here.

\section{Appendix B: Proof of Theorem \ref{th:prob}}
Here, we prove the soundness of the protocol.
\begin{manualtheorem}{1}
Let $C_\epsilon$ be the event that the protocol does not abort and the state used for the anonymous transmission is such that  $F'(\ket{\Psi}) = \sqrt{1-\epsilon^2}$. Then for the honest agents, for all $\epsilon >0$,
\begin{equation}
\text{Pr}[C_\epsilon] \leq  2^{-S}\frac{4n}{1 - \sqrt{1 - \epsilon^2}}.
\end{equation}\label{th:prob}
\end{manualtheorem}
\begin{proof}
Our aim is to bound the probability that the protocol does not abort and the fidelity of the state $\ket{\Psi}$ used for anonymous transmission is given by $F'(\ket{\Psi}) = \underset{U}{\max \ } F(U \ket{\Psi}, \ket{\Phi_0^n}) = \sqrt{1 - \epsilon^2}$, where $U$ is a general operator on the space of the malicious agents.

Although we allow the malicious source to create any state in any round and even entangle the states between rounds, the optimal cheating strategy, which maximises the probability of the event $C_\epsilon$, is to create in each round some pure state $\ket{\Psi}$ such that $F'(\ket{\Psi}) = \sqrt{1-\epsilon^2}$, as proved in \cite{PAW:prl12}. In high level, one can first see that an entangled strategy does not help, as it can be replaced by a strategy sending unentangled states as follows. Given some entangled state, for a given round, the probability of passing the test and the fidelity of the state depend only on the reduced state, conditioned on passing previous rounds. The exact same effect can be achieved by sending these mixed reduced states corresponding to each round, without any entanglement.

Next, one sees that by providing a mixed state, the source does not gain any advantage, as a mixed state is a probabilistic mixture of pure states, and the overall cheating probability of this mixed strategy is just a weighted combination of the cheating probabilities of each of the pure states. Then, obviously this mixed strategy is worse than the strategy that always sends the pure state that has the maximum cheating probability of all states in the mixture. Hence, one can continue the proof by only considering strategies with pure states.  

Moreover, since the adversary is just trying to maximise the probability the state $\ket{\Psi}$ used for anonymous transmission has $F'(\ket{\Psi})= \sqrt{1 - \epsilon^2}$, it is clear that there is no need to send any state with even smaller $F'(\ket{\Psi})$, since then the probability of failing the test (and therefore the protocol aborting) would just increase. Last, if in any round the source created a state with higher $F'(\ket{\Psi})$, then this certainly does not contribute to the event $C_\epsilon$, and in fact it may also cause the protocol to abort. Thus, to upper-bound the probability of event $C_\epsilon$ with respect to the best attack a malicious source can perform, we only need to consider the case where in each round the malicious source creates some state $\ket{\Psi}$ such that $F'(\ket{\Psi}) = \sqrt{1-\epsilon^2}$.

First, we consider the probability that the state is used in round $l$. For this to happen, the Sender must get the result of all $S$ coin flips to be heads ($x=0$), which happens with probability $2^{-S}$. The Sender then calls \textsf{RandomBit} with her input as 0. The output of \textsf{RandomBit} will then be 0 with probability 1, since all agents input 0 (note that the Sender can see if any agent behaves dishonestly and inputs 1, as the output of \textsf{RandomBit} can then be 1, and so the Sender will abort).

Second, we consider the probability that the state is tested in all $(l-1)$ previous rounds. Here, the Sender must input 1 to \textsf{RandomBit}. First, the probability of the Sender not getting all $S$ coin flips to be 0 is given by $1-2^{-S}$. Then, the probability that \textsf{RandomBit} will give an output of 1 is given by $1 - 2^{-S}$ after $S$ rounds. Malicious agents will not affect the output, since even if they input 1 to \textsf{RandomBit}, the output will still be 1. Thus, the overall probability is given by $[(1-2^{-S})(1-2^{-S})]^{l-1}$.

Finally, we consider the probability that all the $(l-1)$ tests have passed. In our protocol, a randomly chosen agent $j$ runs the \textsf{Verification} protocol as the Verifier. If the Verifier is honest (which happens with probability $\frac{k}{n}$), the probability that the test is passed with a state $\ket{\Psi}$ is given by $P(\ket{\Psi})$. If the Verifier is malicious (with probability $ \frac{n-k}{n}$), we take the probability to be 1 as the worst case scenario. Then, we can write the probability that all $(l-1)$ tests have passed as $
\big( \frac{n-k}{n} + \frac{k}{n} P(\ket{\Psi}) \big)^{l-1}$. Note that from \cite{PAW:prl12}, the probability that a state $\ket{\Psi}$ with fidelity $F'(\ket{\Psi})$ will pass the test is given by $P(\ket{\Psi}) \leq \frac{3}{4} + \frac{F'}{4}$.

Thus, the total probability of event $C_{\epsilon}$ at the $l^{th}$ repetition of the protocol is:
\begin{align}
Pr [C_{\epsilon}^l] & \leq 2^{-S} \Big( 1 - 2^{1-S}  + 2^{-2S} \Big)^{l-1} \Big( 1 - \big( \frac{k - F'k}{4n} \big) \Big)^{l-1}.
\end{align}
We then take the integral to upper bound this probability as follows:
\begin{align}
Pr[C_{\epsilon}] & \leq \int_{0}^{\infty} 2^{-S} (1 - 2^{1-S} + 2^{-2S} )^l \Big( 1 - \big( \frac{k - F'k}{4n} \big)\Big)^{l} dl \\
& \leq 2^{-S}  \int_{0}^{\infty} \Big( 1 - \big( \frac{k - F'k}{4n} \big)\Big)^{l} dl \\
& = - \frac{2^{-S}}{\log{(1 - \big( \frac{k - F'k}{4n} \big) )}} \\
 & \leq 2^{-S} \frac{4n}{k (1 - F')} \\
&  \leq 2^{-S} \frac{4n}{k (1 - \sqrt{1 - \epsilon^2})}.
\end{align}
Since each honest agent does not know which other agents are honest or malicious, we can further upper-bound this in terms of a security statement for the honest agents:
\begin{align}
\text{Pr}[C_\epsilon] \leq 2^{-S} \frac{4n}{1-\sqrt{1-\epsilon^2}}.
\end{align}
If the agents take $S = \log_2 (\frac{4n}{(1-\sqrt{1-\epsilon^2}) \delta})$, they get $\text{Pr}[C_\epsilon] \leq \delta$. The expected number of runs of the protocol is given by $2^S = \frac{4n}{(1-\sqrt{1-\epsilon^2}) \delta}$. Thus, they can make this probability of failure negligible by doing a large number of runs.
\end{proof}

\section{Appendix C: Proof of Theorem \ref{th:anon}}
Next, we prove the anonymity of the protocol. For simplicity of the proof, recall that we denote the ideal state by $\ket{\Phi_0^n}$,
which can be obtained from the GHZ state by applying a Hadamard and a phase shift $\sqrt{Z}$ to each qubit. The Sender's transformation now becomes $\sigma_x \sigma_z$.
Further, we also define the state:
\begin{align}
 \ket{\Phi_1^n}  = \frac{1}{\sqrt{2^{n-1}}} \Big[ \underset{\Delta(y) = 1 \text{ (mod 4)}}{\sum} \ket{y} - \underset{\Delta(y) = 3 \text{ (mod 4)}}{\sum} \ket{y} \Big],
 \end{align}
and note that $\sigma_x \sigma_z \ket{\Phi_0^n} = \ket{\Phi_1^n}, \sigma_x \sigma_z \ket{\Phi_1^n} = - \ket{\Phi_0^n}$.

We consider two cases here: first, when all the agents are honest (Lemma \ref{l:honest}), and secondly, when we have malicious agents who could apply some operation on their part of the state (Lemma \ref{l:dishonest}).
\renewcommand{\thetheorem}{2A}
\begin{lemma}
If all the agents are honest, and they share a state $\ket{\Psi}$ such that $F(\ket{\Psi}, \ket{\Phi_0^n}) = \sqrt{1 - \epsilon^2}$, then for every honest agent $i, j$ who could be the Sender, we have that $F(\ket{\Psi_i}, \ket{\Psi_j}) \geq 1- \epsilon^2$, where $\ket{\Psi_i}$ is the state after agent $i$ has applied the Sender's transformation.
\label{l:honest}
\end{lemma}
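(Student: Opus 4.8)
The plan is to exploit the permutation symmetry of the ideal state $\ket{\Phi_0^n}$. The decisive observation is that the Sender's transformation $\sigma_x\sigma_z$, applied to the qubit of \emph{any} single agent, sends $\ket{\Phi_0^n}$ to one and the same state $\ket{\Phi_1^n}$: this is exactly the relation $\sigma_x\sigma_z\ket{\Phi_0^n}=\ket{\Phi_1^n}$ recorded above, which holds irrespective of which qubit the operator acts on because $\ket{\Phi_0^n}$ is symmetric. Hence, on the ideal component of the shared state it makes no difference whether agent $i$ or agent $j$ plays the Sender: both produce the common target $\ket{\Phi_1^n}$. The entire content of the lemma is then to show that the small deviation of $\ket{\Psi}$ from $\ket{\Phi_0^n}$ cannot turn this coincidence into a large discrepancy.

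Concretely, I would first write $\ket{\Psi}=\sqrt{1-\epsilon^2}\,\ket{\Phi_0^n}+\epsilon\,\ket{\chi}$ with $\langle \Phi_0^n | \chi \rangle = 0$, absorbing the phase of the ideal part so that its coefficient is real and positive. Applying the Sender's operator on qubit $i$ gives $\ket{\Psi_i}=\sqrt{1-\epsilon^2}\,\ket{\Phi_1^n}+\epsilon\,\ket{\chi_i}$ with $\ket{\chi_i}=(\sigma_x\sigma_z)_i\ket{\chi}$, and likewise for $j$. The key sub-step is to check that each error vector stays orthogonal to the common target: using that $\sigma_x\sigma_z$ is anti-Hermitian together with $\sigma_x\sigma_z\ket{\Phi_1^n}=-\ket{\Phi_0^n}$, one gets $\langle\Phi_1^n|(\sigma_x\sigma_z)_i=\langle\Phi_0^n|$, whence $\langle \Phi_1^n | \chi_i \rangle=\langle \Phi_0^n | \chi \rangle=0$, and similarly for $j$. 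The overlap then collapses to $\langle \Psi_i | \Psi_j \rangle=(1-\epsilon^2)+\epsilon^2\,\langle \chi_i | \chi_j \rangle$, the two cross terms vanishing by this orthogonality.

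It then remains to control the residual error overlap $\langle \chi_i | \chi_j \rangle=\langle\chi|(\sigma_x\sigma_z)_i^\dagger(\sigma_x\sigma_z)_j|\chi\rangle$. For $i\ne j$ this is the expectation of a tensor product of two anti-Hermitian single-qubit operators, hence of a Hermitian operator, so it is a \emph{real} number in $[-1,1]$; consequently $\langle \Psi_i | \Psi_j \rangle$ is real and the fidelity is simply $\lvert\langle \Psi_i | \Psi_j \rangle\rvert$. I expect the sign of this residual term to be the main obstacle: constructive interference of the error components only improves the bound, whereas destructive interference is what must be ruled out or absorbed. Bounding crudely by $\langle \chi_i | \chi_j \rangle\geq-1$ already yields $F(\ket{\Psi_i},\ket{\Psi_j})\geq 1-2\epsilon^2$, so the work lies in tracking the error term rather than bounding its magnitude by one, in order to reach the sharper stated constant $1-\epsilon^2$. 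Whichever constant one lands on, the upshot is the same qualitative fact the next lemma requires: two candidate Senders produce near-identical global states, so no measurement can distinguish them with significant advantage.
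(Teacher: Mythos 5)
Your decomposition is essentially the paper's own: its proof writes $\ket{\Psi}=(1-\epsilon^2)^{1/4}\ket{\Phi_0^n}+\epsilon_1\ket{\Phi_1^n}+\sum_{i\geq 2}\epsilon_i\ket{\Phi_i^n}$, uses exactly the relations $\sigma_x\sigma_z\ket{\Phi_0^n}=\ket{\Phi_1^n}$ and $\sigma_x\sigma_z\ket{\Phi_1^n}=-\ket{\Phi_0^n}$ (qubit-independent by symmetry, as you note), kills the same cross terms by orthogonality, and reduces the overlap to $\sqrt{1-\epsilon^2}+\epsilon_1^2+\sum_{i\geq2}\epsilon_i'\epsilon_i''$, where the last two terms are precisely your residual $\langle\chi_i|\chi_j\rangle$. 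One discrepancy of bookkeeping: the paper's convention is $F(\ket{\Psi},\ket{\Phi_0^n})=\abs{\langle\Psi|\Phi_0^n\rangle}^2=\sqrt{1-\epsilon^2}$, so the ideal amplitude is $(1-\epsilon^2)^{1/4}$ and the error weight is $1-\sqrt{1-\epsilon^2}\approx\epsilon^2/2$, not $\epsilon^2$ as in your ansatz; likewise the paper squares the final overlap where you take its modulus. This shifts constants but not the structure of the argument.

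The step you flag as "the main obstacle" is indeed the crux, and you should know that the paper does not resolve it: it passes directly from $F(\ket{\Psi_i},\ket{\Psi_j})=\abs{\sqrt{1-\epsilon^2}+\epsilon_1^2+\sum_{i\geq2}\epsilon_i'\epsilon_i''}^2$ to the bound $\geq 1-\epsilon^2$, which is valid only if the residual $\epsilon_1^2+\sum_{i\geq2}\epsilon_i'\epsilon_i''$ is nonnegative --- exactly the sign question you raise, asserted there without argument. And the sign cannot be controlled for free: as you compute, $\langle\chi_i|\chi_j\rangle=\langle\chi|Y_iY_j|\chi\rangle$, and for $n\geq 3$ the $(-1)$-eigenspace of $Y_iY_j$ (dimension $2^{n-1}$) intersects the orthogonal complement of $\mathrm{span}\{\ket{\Phi_0^n},\ket{\Phi_1^n}\}$, so a worst-case error vector $\ket{\chi}$ drives the residual to its most negative value; under the paper's normalization the overlap then drops to $2\sqrt{1-\epsilon^2}-1$, whose square is strictly below $1-\epsilon^2$ (since $(2s-1)^2<s^2$ for $\tfrac13<s<1$, $s=\sqrt{1-\epsilon^2}$). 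So your honest stopping point of $1-2\epsilon^2$ (which, to leading order in $\epsilon$, is also what the worst case yields in the paper's convention) is what this style of argument actually delivers, and the constant $1-\epsilon^2$ in the lemma is not justified by the paper's own proof either. The damage is confined to a constant: downstream, Theorem \ref{th:anon} only needs $D(\ket{\Psi_i},\ket{\Psi_j})=O(\epsilon)$, and $F\geq 1-2\epsilon^2$ gives $D\leq\sqrt{2}\,\epsilon$, so the anonymity bound degrades to $\tfrac1k+\sqrt{2}\,\epsilon$ rather than failing. In short: your proposal takes the same route as the paper, stops one inequality short of the stated constant, and in doing so correctly identifies a genuine gap in the paper's proof rather than a defect of your own.
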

\begin{proof}
If we have $F(\ket{\Psi}, \ket{\Phi_0^n}) = \abs{\bra{\Psi}\ket{\Phi_0^n}}^2 = \sqrt{1 - \epsilon^2}$, then similarly to \cite{PAW:prl12} we can write the state shared by all the agents as:
\begin{align}
\ket{\Psi} = (1 - \epsilon^2)^{1/4} \ket{\Phi_0^n} + \epsilon_1 \ket{\Phi_1^n} + \sum_{i=2}^{2^n-1} \epsilon_i \ket{\Phi_i^n},
\end{align}
where $\sum_{i=1}^{2^n-1} \epsilon_i^2 = 1 - \sqrt{1-\epsilon^2}$. If agent $i$ is the Sender, then she applies $\sigma_x \sigma_z$, and the state becomes:
\begin{align}
\ket{\Psi_i} = (1 - \epsilon^2)^{1/4} \ket{\Phi_1^n} - \epsilon_1 \ket{\Phi_0^n} + \sum_{i=2}^{2^n-1} \epsilon_i' \ket{\Phi_i^n}.
\end{align}
Instead, if agent $j$ is the Sender and she applies $\sigma_x \sigma_z$, the state becomes:
\begin{align}
\ket{\Psi_j} = (1 - \epsilon^2)^{1/4} \ket{\Phi_1^n} - \epsilon_1 \ket{\Phi_0^n} + \sum_{i=2}^{2^n-1} \epsilon_i'' \ket{\Phi_i^n}.
\end{align}
The fidelity is then given by:
\begin{align}
F(\ket{\Psi_i}, \ket{\Psi_j}) & = \abs{\bra{\Psi_i}\ket{\Psi_j}}^2 \\
& = \abs{\sqrt{1 - \epsilon^2} + \epsilon_1^2 + \sum_{i=2}^{2^n-1} \epsilon_i' \epsilon_i''}^2 \\
& \geq 1 - \epsilon^2.
\end{align}
\end{proof}

\renewcommand{\thetheorem}{2B}
\begin{lemma}
If some of the agents are malicious, and they share a state $\ket{\Psi}$ such that $F'(\ket{\Psi}) \geq \sqrt{1-\epsilon^2}$, then for every honest agent $i, j$ who could be the Sender, we have that $F(\ket{\Psi_i}, \ket{\Psi_j}) \geq 1 - \epsilon^2 $, where $\ket{\Psi_i}$ is the state after agent $i$ has applied the Sender's transformation.
\label{l:dishonest}
\end{lemma}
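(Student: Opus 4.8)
The plan is to reduce this statement, in which malicious agents are present, directly to the all-honest case of Lemma~\ref{l:honest}, by absorbing the malicious agents' freedom into a single fixed unitary. Recall that $F'(\ket{\Psi}) = \max_U F(U\ket{\Psi}, \ket{\Phi_0^n})$, the maximum being over unitaries $U$ supported only on the malicious agents' subsystem. Since this maximum is attained over the (compact) unitary group, the hypothesis $F'(\ket{\Psi}) \geq \sqrt{1-\epsilon^2}$ supplies a particular such $U$ with $F(U\ket{\Psi}, \ket{\Phi_0^n}) \geq \sqrt{1-\epsilon^2}$. Setting $\ket{\tilde\Psi} = U\ket{\Psi}$, this is an $n$-qubit pure state whose overlap with the ideal $\ket{\Phi_0^n}$ is at least $\sqrt{1-\epsilon^2}$, i.e.\ exactly the situation governed by Lemma~\ref{l:honest}.

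The crucial structural observation I would make next is that the Sender's transformation commutes with $U$. Because agents $i$ and $j$ are honest, the operation $\sigma_x \sigma_z$ that the Sender applies acts on an honest qubit, whose register is disjoint from the subsystem on which $U$ acts; hence $U$ and the Sender's transformation commute. Writing $\ket{\Psi_i}$ for the state after honest agent $i$ applies $\sigma_x\sigma_z$ to $\ket{\Psi}$ (and similarly $\ket{\Psi_j}$ for agent $j$), this commutation gives $U\ket{\Psi_i} = \sigma_x\sigma_z\ket{\tilde\Psi} =: \ket{\tilde\Psi_i}$ and $U\ket{\Psi_j} =: \ket{\tilde\Psi_j}$, where in each case $\sigma_x\sigma_z$ is understood to act on the qubit of the corresponding honest agent.

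I would then appeal to the unitary invariance of the fidelity. Applying the single unitary $U$ to both arguments leaves the fidelity unchanged, so
\begin{equation}
F(\ket{\Psi_i}, \ket{\Psi_j}) = F(U\ket{\Psi_i}, U\ket{\Psi_j}) = F(\ket{\tilde\Psi_i}, \ket{\tilde\Psi_j}).
\end{equation}
But $\ket{\tilde\Psi_i}$ and $\ket{\tilde\Psi_j}$ are precisely the states obtained when honest agents $i$ and $j$ apply the Sender's transformation to $\ket{\tilde\Psi}$, a state whose fidelity with $\ket{\Phi_0^n}$ is at least $\sqrt{1-\epsilon^2}$. Invoking Lemma~\ref{l:honest} for $\ket{\tilde\Psi}$ immediately yields $F(\ket{\tilde\Psi_i}, \ket{\tilde\Psi_j}) \geq 1-\epsilon^2$, and the claim follows.

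The main point requiring care is the commutation step: it hinges on the Sender being an honest agent, so that $\sigma_x\sigma_z$ and $U$ have disjoint supports, which is what lets a single global unitary carry $\ket{\Psi}$ into the near-ideal $\ket{\tilde\Psi}$ while preserving the pairwise fidelity. A secondary subtlety is that Lemma~\ref{l:honest} is stated with the equality $F(\ket{\Psi},\ket{\Phi_0^n}) = \sqrt{1-\epsilon^2}$, whereas here only the inequality $\geq$ is available; I would note that the bound degrades monotonically, since a larger overlap only increases the leading coefficient $(1-\epsilon^2)^{1/4}$ in the expansion of $\ket{\tilde\Psi}$ and shrinks the error budget $\sum_i \epsilon_i^2$, so the estimate on $\langle\tilde\Psi_i|\tilde\Psi_j\rangle$ from Lemma~\ref{l:honest} only improves.
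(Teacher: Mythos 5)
Your proof is correct, but it takes a genuinely different route from the paper's. The paper does not reduce Lemma~\ref{l:dishonest} to Lemma~\ref{l:honest}: it re-derives the bound from scratch, writing $\ket{\Psi'}=U\ket{\Psi}$ in the honest/malicious split form $\ket{\Phi_0^k}\ket{\psi_0}+\ket{\Phi_1^k}\ket{\psi_1}+\ket{\chi}$, then using the identity $\ket{\Phi_0^n}=\frac{1}{\sqrt{2}}\big(\ket{\Phi_0^k}\ket{\Phi_0^{n-k}}-\ket{\Phi_1^k}\ket{\Phi_1^{n-k}}\big)$ together with the triangle and Cauchy--Schwarz inequalities to show $\bra{\psi_0}\ket{\psi_0}+\bra{\psi_1}\ket{\psi_1}\geq\sqrt{1-\epsilon^2}$, from which the fidelity bound follows. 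Your argument instead absorbs the optimal $U$ once and for all: since $U$ is supported on the malicious registers and the Sender is honest, $U$ commutes with $\sigma_x\sigma_z$ at position $i$ or $j$, and unitary invariance of the fidelity transports the problem to $U\ket{\Psi}$, where Lemma~\ref{l:honest} applies. This is sound and shorter, and it is legitimate to invoke Lemma~\ref{l:honest} here even though it is stated for all-honest agents, because its proof uses nothing about the other agents beyond the Sender acting at position $i$ or $j$; your handling of the equality-versus-inequality mismatch is also fine, though it is cleaner to define $\tilde\epsilon\leq\epsilon$ by $F(U\ket{\Psi},\ket{\Phi_0^n})=\sqrt{1-\tilde\epsilon^2}$ and apply Lemma~\ref{l:honest} at $\tilde\epsilon$, giving $F(\ket{\Psi_i},\ket{\Psi_j})\geq 1-\tilde\epsilon^2\geq 1-\epsilon^2$. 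What the paper's longer computation buys is generality in the malicious subsystem: the unnormalised vector-valued coefficients $\ket{\psi_0},\ket{\psi_1}$ make no assumption that the malicious agents hold exactly $n-k$ qubits, so the argument goes through verbatim if they attach ancillas of arbitrary dimension, whereas your reduction uses Lemma~\ref{l:honest} as a statement about an $n$-qubit state and would need that lemma restated with an expansion in the (invariant) span of $\ket{\Phi_0^n},\ket{\Phi_1^n}$ tensored with an arbitrary ancilla space to cover that case. Finally, be aware that your reduction inherits whatever slack is present in Lemma~\ref{l:honest}'s last step (the cross term $\sum_i\epsilon_i'\epsilon_i''$ need not be nonnegative, so the displayed inequality there deserves more care), but on this point you are no worse off than the paper, whose own proof of Lemma~\ref{l:dishonest} makes the analogous move with $\bra{\chi'}\ket{\chi''}$.
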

\begin{proof}
Recall that our fidelity measure is given by $F'(\ket{\Psi}) = \underset{U}{\max \ } F(U\ket{\Psi}, \ket{\Phi_0^n})$. Let us now denote by $\ket{\Psi'}=U \ket{\Psi}$ the state after the operation $U$ which maximises this fidelity has been applied. As in \cite{PAW:prl12}, we can write this state in the most general form as:
\begin{align}
\ket{\Psi'} = \ket{\Phi_0^k} \ket{\psi_0} + \ket{\Phi_1^k} \ket{\psi_1} + \ket{\chi},
\end{align}
where note that $\ket{\chi}$ contains both honest and malicious parts, of which the honest part is orthogonal to both $\ket{\Phi_0^k}$ and $\ket{\Phi_1^k}$.

We want to find the closeness of the states $\ket{\Psi_i}, \ket{\Psi_j}$, which are the states after the $\sigma_x \sigma_z$ operation is applied to $\ket{\Psi'}$ by either agent $i$ or $j$ who is the Sender. 
These states are given by:
\begin{align}
\ket{\Psi_i} & = \ket{\Phi_1^k} \ket{\psi_0} - \ket{\Phi_0^k} \ket{\psi_1} + \ket{\chi'}, \\
\ket{\Psi_j} & = \ket{\Phi_1^k} \ket{\psi_0} - \ket{\Phi_0^k} \ket{\psi_1} + \ket{\chi''}.
\end{align}
The fidelity is then given by:
\begin{align}
F(\ket{\Psi_i}, \ket{\Psi_j}) & = \abs{\bra{\Psi_i}\ket{\Psi_j}}^2 \\
& = \abs{\bra{\psi_0}\ket{\psi_0} + \bra{\psi_1}\ket{\psi_1} + \bra{\chi'}\ket{\chi''}}^2.
\end{align}
However, although the overall state $\ket{\Psi'}$ is normalised, the malicious agents' part of the state is not. Thus, we need to determine a bound on $\bra{\psi_0}\ket{\psi_0}$ and $\bra{\psi_1}\ket{\psi_1}$. We have: 
\begin{align}
F( \ket{\Psi'}, \ket{\Phi_0^n}) =  \abs{\bra{\Phi_0^n}\ket{\Psi'}}^2  \geq \sqrt{1 - \epsilon^2}.
\end{align}
It was shown in \cite{PAW:prl12} that we can write for any $k, n$:
\begin{align}
\ket{\Phi_0^n} = \frac{1}{\sqrt{2}} \Big[ \ket{\Phi_0^k} \ket{\Phi_0^{n-k}} - \ket{\Phi_1^k} \ket{\Phi_1^{n-k}} \Big],
\end{align}
and using this, we get:
 \begin{align}
 \frac{1}{2} | & (\bra{\Phi_0^{n-k}}\ket{\psi_0})^2 + (\bra{\Phi_1^{n-k}}\ket{\psi_1})^2 \nonumber \\
 & - 2 \bra{\Phi_0^{n-k}}\ket{\psi_0} \bra{\Phi_1^{n-k}}\ket{\psi_1} |  \geq \sqrt{1 - \epsilon^2}.
 \end{align}
Using the triangle inequality, we have:
 \begin{align}
  \frac{1}{2} \Big[ \abs{\bra{\Phi_0^{n-k}}\ket{\psi_0}}^2 & + \abs{\bra{\Phi_1^{n-k}}\ket{\psi_1}}^2  \Big]  \geq \sqrt{1 - \epsilon^2 }.
\end{align}
Using the Cauchy-Schwarz inequality, we have:
\begin{align}
\bra{\psi_0}\ket{\psi_0} + \bra{\psi_1}\ket{\psi_1} & \geq \abs{\bra{\Phi_0^{n-k}}\ket{\psi_0}}^2  + \abs{\bra{\Phi_1^{n-k}}\ket{\psi_1}}^2 \\
& \geq \sqrt{1 - \epsilon^2}.
\end{align}
Since the overall state $\ket{\Psi'}$ is normalised, we have $\bra{\chi'}\ket{\chi''} \leq 1 - \sqrt{1 - \epsilon^2}$. Thus, we get our expression for fidelity as:
 \begin{align}
 F(\ket{\Psi_i}, \ket{\Psi_j}) & =  \abs{\bra{\psi_0}\ket{\psi_0} + \bra{\psi_1}\ket{\psi_1} + \bra{\chi'}\ket{\chi''}}^2 \\
 & \geq 1 - \epsilon^2.
 \end{align}
\end{proof}

We are now ready to prove Theorem \ref{th:anon}.
\begin{manualtheorem}{2}
If the agents share a state $\ket{\Psi}$ such that $F'(\ket{\Psi}) \geq \sqrt{1-\epsilon^2}$, then the probability that the malicious agents can guess the identity of the Sender is given by:
\begin{align}
\text{Pr}[\text{guess}] & \leq \frac{1}{k} + \epsilon.
\end{align}
\label{th:anon}
\end{manualtheorem}
\begin{proof}
We will now show that if the agents share close to the GHZ state, then the Sender remains anonymous.
From Theorem \ref{th:prob}, we saw that the probability that the state used for anonymous transmission satisfies $F'(\ket{\Psi}) \leq \sqrt{1 - \epsilon^2}$ is given by $
\text{Pr}[C_\epsilon] \leq \delta$
for the honest agents, where $\delta$ depends on the number of runs of the verification protocol. Thus, by doing enough runs, we can make this very small, and so we have that the state used for anonymous transmission will be close to the GHZ state, as given by $F'(\ket{\Psi}) \geq \sqrt{1 - \epsilon^2}$.

From the previous proof, we see that if $F'(\ket{\Psi}) \geq \sqrt{1-\epsilon^2}$, the distance between the states if agent $i$ or $j$ was the Sender is $D(\ket{\Psi_i}, \ket{\Psi_j}) \leq \epsilon$. A malicious agent who wishes to guess the identity of the Sender would make some sort of measurement to do so. Thus, we wish to find the maximum success probability of a measurement that could distinguish between the $k$ states that are the result of the Sender (who can only be an honest agent) applying the $\sigma_x \sigma_z$ transformation.

The success probability of discriminating between $k$ states is given by $ \sum_{i=1}^k p_i \text{Tr} (\Pi_i \rho_i)$. From Lemma \ref{l:dishonest}, we know that the distance between any two states after the Sender's transformation is upper-bounded by $\epsilon$. Thus, if we take $\ket{\alpha} = \ket{\Psi_j}$, then we know that any of these $k$ states is of distance $\epsilon$ away from this same state $\ket{\alpha}$.

For any POVM element $P$, we can write the trace distance between two states $\rho, \sigma$ as $
 \text{Tr} \big[ P (\rho - \sigma) \big] \leq D(\rho, \sigma)$.
Thus, we have for a POVM element $\Pi_i$ and for states $\ket{\Psi_i}, \ket{\alpha}$:
\begin{align}
 \text{Tr} (\Pi_i \ket{\Psi_i}\bra{\Psi_i}) - \text{Tr} (\Pi_i \ket{\alpha}\bra{\alpha})  \leq \epsilon.
\end{align}
Assuming that each honest agent has an equiprobable chance of becoming the Sender, the probability that the malicious agents can guess the identity of the Sender is bounded by:
\begin{align}
\text{Pr}[\text{guess}]
& = \sum_{i=1}^k \frac{1}{k} \text{Tr} (\Pi_i \ket{\Psi_i}\bra{\Psi_i})
\\
& \leq \frac{1}{k} \sum_{i=1}^k \Big[ \text{Tr} (\Pi_i \ket{\alpha}\bra{\alpha}) + \epsilon
 \Big]  \\
& = \frac{1}{k} \text{Tr}\Big[\sum_{i=1}^k \Pi_i \ket{\alpha}\bra{\alpha}\Big] + \frac{1}{k} k  \epsilon  \\
& = \frac{1}{k} \text{Tr} (\ket{\alpha}\bra{\alpha}) + \epsilon \\
& = \frac{1}{k} + \epsilon.
\end{align}
\end{proof}

\end{document}